\newtheorem{theorem}{Theorem}[section]
\newtheorem{lemma}[theorem]{Lemma}
\newtheorem{definition}[theorem]{Definition}
\newcommandx*{\bigO}[2][1=@pkling_false]{\mathcal{O}\ifthenelse{\equal{#1}{small}}{\bigl(#2\bigr)}{\left(#2\right)}}
\newcommandx*{\LDAUomicron}[2][1=@pkling_false]{\mathrm{o}\ifthenelse{\equal{#1}{small}}{\bigl(#2\bigr)}{\left(#2\right)}}
\newcommandx*{\LDAUOmega}[2][1=@pkling_false]{\Omega\ifthenelse{\equal{#1}{small}}{\bigl(#2\bigr)}{\left(#2\right)}}
\newcommandx*{\LDAUomega}[2][1=@pkling_false]{\omega\ifthenelse{\equal{#1}{small}}{\bigl(#2\bigr)}{\left(#2\right)}}
\newcommandx*{\LDAUTheta}[2][1=@pkling_false]{\Theta\ifthenelse{\equal{#1}{small}}{\bigl(#2\bigr)}{\left(#2\right)}}
\newcommandx*{\set}[2][2=@pkling_false]{\left\{#1\ifthenelse{\equal{#2}{@pkling_false}}{}{\;\middle|\;#2}\right\}}
\renewcommand*{\@fnsymbol}[1]{}
\begin{document}

\title{Online Top-$k$-Position Monitoring of Distributed Data Streams}

\author{Alexander M\"acker \and Manuel Malatyali\thanks{This work was partially supported by the German Research Foundation (DFG) within the Priority Program ``Algorithms for Big Data'' (SPP 1736) and by the EU within FET project MULTIPLEX under contract no.\ 317532.} \and Friedhelm Meyer auf der Heide \\ [0.4em]
	Heinz Nixdorf Institute \& 	Computer Science Department\\
	University of Paderborn, Germany\\[0.2em]
	\{amaecker, malatya, fmadh\}@hni.upb.de}

\date{}

\maketitle

\begin{abstract}
		Consider $n$ nodes connected to a single coordinator.
		Each node receives an individual online data stream of numbers and, at any point in time, the coordinator has to know the $k$ nodes currently observing the largest values, for a given $k$ between $1$ and $n$. 
		We design and analyze an algorithm that solves this problem while bounding the amount of messages exchanged between the nodes and the coordinator.
		Our algorithm employs the idea of using filters which, intuitively speaking, leads to few messages to be sent, if the new input is ``similar'' to the previous ones.
		The algorithm uses a number of messages that is on expectation by a factor of $\bigO{(\log \Delta + k)\cdot \log n}$ larger than that of an offline algorithm that sets filters in an optimal way, where $\Delta$ is upper bounded by the largest value observed by any node.
\end{abstract}

\section{Introduction}
In this paper we consider a model in which there is one coordinator and a set of $n$ distributed nodes connected to the coordinator. 
Each node continuously receives data from an input stream only known to the respective node or, in other words, observes a private function whose value changes over time. 
At any time, the coordinator has to know the $k$ nodes currently observing the $k$ largest values. In order to inform the coordinator about its current value, a node can exchange messages with the coordinator. 
Additionally, the coordinator can send broadcast messages received by all nodes. 
The goal in designing an algorithm for this setting, which we call \emph{Top-$k$-Position Monitoring}, is to find a solution that, on the one hand, keeps the coordinator informed as much as necessary for solving the problem and, at the same time, aims at minimizing the communication, i.e., the number of messages, between the coordinator and the distributed nodes. 

Such a model might be of interest in distributed settings.
Think of a set of sensors which can communicate directly to the coordinator in order to continuously keep track of the subset of $n$ locations at which current the highest $k$ values (of any parameter like speed, temperature, frequency, \ldots) are observed. 
Since these values will change over time generating an extremely large amount of data, it is not reasonable (and even not necessary) to keep the coordinator informed about all the changes observed somewhere in the system. 
Instead, reducing the communication is desired and might even get inevitable in large systems. 

For the considered problem, we present an algorithm that combines the notion of filters with a kind of random sampling of nodes. 
The basic idea of assigning filters to the distributed nodes is to reduce the number of exchanged messages by providing nodes constraints defining when they can safely resign to send observed changes in their input streams to the coordinator. 
However, if it might become necessary to communicate observed changes and update filters, we make extensive use of a new randomized protocol for determining the maximum (or minimum) value currently observed by (a certain subset of) the nodes.

As our problem is an online problem, since the values observed by the nodes change over time and are not known in advance, in our analysis we compare the number of messages exchanged by our online algorithm to that of an offline algorithm that sets filters in an optimal way. 
We show that they differ by a factor of at most $\bigO{(\log{\Delta} + k) \cdot \log{n}}$ on expectation, where $\Delta$ is the largest difference of the values observed at the nodes holding the $k$-th and $(k+1)$-st largest value at any time.

\subsection{Related Work}
\label{sec:relatedWork}
Efficient computation of functions on big datasets in terms of (single or distributed) data streams has turned out to be an important topic of research with applications in IP network traffic analysis, text mining or databases \cite{sanders}. 
The theory of data stream algorithms \cite{muthu} investigates the possibilities and limitations of approaches in this setting. 
The design of algorithms typically focuses on the computation of functions or their relaxations with objectives like using as less memory as possible, usually polylogarithmic in the length of the stream. 

Cormode et al.\ introduce the Continuous Monitoring Model \cite{cormodeSurvey} focusing on systems comprised of a coordinator and $n$ nodes generating or observing \emph{distributed} data streams. 
Consequently, the goal shifts to continuously (that is, in every time step) compute a function depending on the information available across all $n$ data streams up to the current time at a dedicated coordinator. 
The major objective becomes the minimization of the overall number of messages exchanged between the nodes and the coordinator.
We refer to this model and enhance it by a broadcast channel as proposed by Cormode et al.\ in \cite{cormodeFunction}. 

An important class of problems in this model investigated in literature are threshold computations, i.e., the coordinator is supposed to decide whether the current function value has reached some given threshold $\tau$ or not. If the function is monotone, e.g., monitoring the number of distinct values or the sum over all values, exact characterizations including matching lower bounds in the deterministic case are known \cite{cormodeFunction,cormodeSampling}. 
However, non-monotone functions, e.g., the entropy \cite{chakrabarti}, turned out to be much more complex to handle.

A general approach to reduce the communication when monitoring distrib\-uted data streams is proposed in \cite{zhangFilter}. Zhang et al.\ introduce the notion of \emph{filters}, which are also an integral part of our algorithm presented in this paper. 
They consider a problem called continuous skyline maintenance, in which a coordinator is supposed to continuously maintain the skyline of dynamic objects. 
As they aim at minimizing the communication overhead between the coordinator and the objects, they use a filter method that helps in avoiding the transmission of updates from the objects to the coordinator in case these updates cannot influence the skyline maintained by the coordinator. More precisely, the dynamic objects are points of a $d$-dimensional space and filters are hyper-rectangles assigned by the coordinator to the objects in such a way that as long as these points are within the assigned hyper-rectangle, updates need not be communicated to the coordinator. 

In their work \cite{yi}, Yi and Zhang were the first to study streaming algorithms with respect to their competitiveness. In their model there is one node and one coordinator and the goal is to keep the coordinator informed about the current value of a multivalued function $f: \mathbb{Z}^+ \to \mathbb{Z}^d$ that is observed by the node and changes its value over time, while minimizing the number of messages. 
Among others, for $d=1$, Yi and Zhang present an algorithm that is $\bigO{\log \delta}$-competitive if the last value received by the coordinator might deviate by $\delta$ from the current value of $f$. For arbitrary $d$, a competitiveness of $\bigO{d^2 \log(d \cdot \delta)}$ is shown.

Following the idea of studying competitive algorithms for monitoring streams and the notion of filters, Lam et al.\ \cite{lam} present an algorithm for online dominance tracking of distributed streams.
In this problem a coordinator always has to be informed about the dominance relationship between $n$ distributed nodes each observing an online stream of $d$-dimensional values. 
Here, the dominance relation of two nodes $b_1$ and $b_2$ currently observing $(p_1, \ldots, p_d)$ and $(q_1, \ldots, q_d)$, respectively, both from $\{1,\ldots, U\}^d$, is defined as $b_1$ dominates $b_2$ if $p_i < q_i$ for some $1 \leq i \leq d$ and $p_j \leq q_j$ for all other $1 \leq j \leq d$. 
Their algorithm is based on the idea of assigning filters to the nodes and they show that a mid-point strategy, which basically sets filters to be the mid-point between neighboring nodes, is $\bigO{d \log U}$-competitive with respect to the number of messages sent in comparison to an offline algorithm that sets filters optimally. 

A further problem related to ours, is examined in \cite{babcock}. 
In their work, Babcock and Olston consider the problem of distributed top-$k$ monitoring.
In the considered setting there is a set of objects $\{O_1, \ldots, O_n\}$ each associated with a numeric value. Additionally, there is a set of nodes each monitoring a stream of tuples $(O_i, \Delta)$ describing that the value associated with the object $O_i$ changes by $\Delta$. 
Note that these changes may occur at different nodes.
The goal then is to always keep the coordinator informed about the objects having the $k$ largest associated values.
If each node observes exactly one object and each object is observed by exactly one node, their problem is basically monitoring the $k$ largest values. 
By experimentally evaluating their approach, Babcock and Olston show that the amount of communication is an order of magnitude lower than that of a naive approach.

Finally, a model related to our (sub-)problem of finding the $k$-th largest value hold by the nodes exploiting the existence of a broadcast channel, was studied in literature as distributed selection using the shout-echo principle (see e.g.\ \cite{marberg,rotem}). However, this line of research aims at minimizing the number of communication cycles (i.e., the number of rounds each consisting of one broadcast and replies from all nodes), which is fundamentally different from our objectives.

\subsection{Our Contribution} 
We present and analyze a new randomized online algorithm for the Top-$k$-Position Monitoring problem, i.e., monitoring the nodes currently holding the $k$ largest values in a setting comprised of a coordinator and $n$ distributed nodes (Sect.~\ref{sec:model}). 

Our algorithm basically consists of two parts:
First of all, we use the notion of filters, which helps in avoiding the exchange of unnecessary updates between the nodes and the coordinator.
Second, whenever filters might be updated, a protocol, which uses $M(n)$ messages, to determine the maximum (or minimum) value hold by the distributed nodes is employed. 
We show that the amount of communication is by a factor of $\bigO{(\log{\Delta}+k)\cdot M(n)}$ larger than that of an offline algorithm which sets the filters optimally (Sect.~\ref{sec:filterAlg}).

We also present a concrete randomized protocol that determines the maximum value hold by $n$ nodes using $M(n) = \bigO{\log{n}}$ messages, with high probability, which we also prove to be asymptotically optimal (Sect.~\ref{sec:maxProtocol}). 
This result can be used to achieve a factor of $\bigO{(\log{\Delta}+k)\cdot \log n}$.

\section{Model and Techniques}
\label{sec:model}
In our setting there are $n$ distributed nodes identified by unique identifiers (IDs) from the set $\{1, \ldots, n\}$, each receiving a continuous data stream $(v_{i}^1, v_i^2, v_i^3 \ldots)$, which can be exclusively observed by the respective nodes. 
Also, at time $t$, a node $i$ observes $v_i^t \in \mathbb{N}$ and does not know any $v_i^{t'}$, $t' > t$.
We omit the index $t$ if it is clear from the context. 

The nodes can communicate to the coordinator, but cannot communicate to each other. 
They are able to store a constant number of integers, compare two integers and perform Bernoulli trials with success probability ${2^i}/{n}$ for $i \in \{0, \ldots, \log{n}\}$.
The coordinator can, on the one hand, communicate to one device, and has, on the other hand, a broadcast-channel to communicate one message received by all nodes at the same time. 
All the communication methods described above incur unit communication cost per message, the delivery is instantaneous, and we allow a message at time $t$ to have a size at most logarithmic in $n$ and $\max_{1\leq i\leq n}(v_i^t)$. 
Following the model in \cite{cormodeFunction}, we also allow that between any consecutive observations, i.e., between any two time steps $t$ and $t+1$, a communication protocol exchanging messages between the coordinator and the nodes may take place.

We consider a problem, which we call \emph{Top-$k$-Position Monitoring}, of monitoring the node IDs of the nodes holding the $k$ largest values, i.e., at each time $t$ the coordinator has to know the value of a function $\mathcal{F}$ mapping from the current values $v_1^t, \ldots, v_n^t$ to the set $\{s_1, \ldots, s_k\}$, $k \leq n$, of nodes currently observing the $k$ largest values among the (multi-)set $\{v_1^t, \ldots, v_n^t\}$. 
We refer to the set of nodes in $\{s_1, \ldots, s_k\}$ at time $t$ as top-$k$, explicitly the values at times $t'<t$ do not matter for this output. 
To ease the presentation, we assume that all $v_i$ are pairwise distinct at every time $t$. 
However, the main results remain valid if we drop this assumption.

In the following we assume w.l.o.g.\ that when talking about a fixed time $t$, node $i$ is at the $i$-th position in the ordered sequence of nodes, $\forall i < j \in \{1, \ldots, n\}:  v_i \geq v_j$, i.e., nodes are ordered descending. 

\subsection{A First Approach \& Classical Analysis}
One naive approach to monitor the Top-$k$-Positions, is to send each value observed by a node to the coordinator. 
In this case, the coordinator gets all the information necessary to determine the value of the function $\mathcal{F}$ at any time. 
However, this approach leads to a huge amount of messages and thus, in the following we are interested in more sophisticated solutions that aim at minimizing the number of messages exchanged between the coordinator and the distributed nodes. 

Now assume there is an algorithm that determines the maximum value observed by the $n$ nodes (Sect.~\ref{sec:maxProtocol}) using $\bigO{\log n}$ messages on expectation. 
Furthermore, assume this algorithm can be extended to determine the nodes within Top-$k$ using $\bigO{k \cdot \log n}$ messages on expectation. 
If we use this approach in each round to determine the Top-$k$, applying it for $T$ rounds yields $\bigO{T \cdot k \cdot \log n}$ messages. 

Determining the node holding the maximum value needs at least $\Omega(\log n)$ messages on expectation (Sect.~\ref{sec:maxProtocol}), implying that each randomized algorithm that determines the nodes within the top-$k$ needs at least $\Omega(\log n)$ messages on expectation.
Therefore, the algorithm described above is optimal up to a factor of $k$ for worst case inputs, i.e. inputs where the position of the maximum changes considerably from round to round.

However, we feel that this classical analysis does not reflect the quality of the algorithm properly for every instance.
Although the analysis states that this algorithm is almost optimal - on instances in which the new observed values are ``similar'' to the values observed in the last round, it behaves poorly.
On theses instances, the algorithm calculates the top-$k$ from scratch in every round, generating communication not needed by a ``somehow clever'' algorithm which, intuitively speaking, identifies and exploits these similarities.

To address this fact, in the next section we consider algorithms that are based on a concept called \emph{filters} \cite{zhangFilter} which aims at minimizing the communication by identifying changes in the observed values that need not be communicated to the coordinator. 

\subsection{Filter-Based Algorithms \& Competitive Analysis}
A set of filters is a collection of intervals, one assigned to each node, such that as long as the observed values at each node are within the given interval, the value of the function $\mathcal{F}$ does not change. 
For the problem at hand, this general idea of filters translates to the following definition.
\begin{definition}
	For a fixed time $t$, a \emph{set of filters} is an n-tuple of intervals $(F_1^t, \ldots, F_n^t)$, $F_i \subseteq \mathbb{N} \cup \{- \infty, \infty \}$ and $v_i \in F_i$, 
	such that as long as the value of node $i$ only changes within its interval, i.e., it holds $v_i \in F_i$, the value of the function $\mathcal{F}$ does not change.
	We use $F_i^t = \left[l_i^t, u_i^t \right]$ to denote the lower and upper bound of a filter interval, respectively.
\end{definition}

In our model, we assume that nodes are assigned such filters by the coordinator. 
If a node \emph{violates} its filter, i.e., the currently observed value is not contained in its filter, the node may report the violation and its current value to the coordinator. The coordinator then computes a new set of filters and sends them to the affected nodes. 
Thus, for such an algorithm to be correct, we demand that the intervals assigned to the nodes always constitute a (valid) set of filters and we call such an algorithm \emph{filter-based}. 
Note that the easiest way of defining a set of filters is to assign the value currently observed by a node as its interval. In this case the usage of filters does not lead to any benefit, so in general we are looking for filters that are as large as possible to minimize the count of filter changes which is directly related to the number of exchanged messages.

In order to serve its purpose that the value of the function $\mathcal{F}$ does not change as long as the observed values are within their filters, 
each pair of filters $\left(F_i,F_j\right)$ of nodes $i \in$ top-$k$ and $j \notin$ top-$k$ must be disjoint except for a (possible) single common point at their boundaries. Formally, we can state this observation as shown in the lemma below.

\begin{lemma}
	\label{le:filterDef}
	For a fixed time $t$, an n-tuple of intervals forms a set of filters if and only if
	\begin{enumerate}
		\item for all $i \in $ top-$k$ it holds $v_i \in F_i = [l_i,u_i]$ and $l_i \geq u_{j} \enspace \forall k < j \leq n$, and 
		\item for all $i \notin$ top-$k$ it holds $v_i \in F_i = [l_i,u_i]$ and $u_i \leq l_{j} \enspace \forall 1 \leq j \leq k$.
	\end{enumerate}
\end{lemma}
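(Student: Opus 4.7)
The plan is to prove the biconditional by handling each direction separately, noting first that the two numbered conditions are equivalent reformulations of the single assertion $l_i \geq u_j$ whenever $i \in$ top-$k$ and $j \notin$ top-$k$, so it suffices to work with this inequality together with the membership requirement $v_i \in F_i$.

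For the direction $(\Leftarrow)$, I would assume the two boundary conditions hold and show directly that any values $v_i' \in F_i$ leave $\mathcal{F}$ unchanged. For any $i \in$ top-$k$ and any $j \notin$ top-$k$ (at the fixed time $t$), the hypothesis gives $v_i' \geq l_i \geq u_j \geq v_j'$. Combined with the standing assumption that all values are pairwise distinct, this shows that every node originally in top-$k$ still holds a value strictly larger than every node originally outside top-$k$, so the set output by $\mathcal{F}$ remains $\{1,\ldots,k\}$. This direction is essentially a one-line calculation.

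For the direction $(\Rightarrow)$, I would argue by contraposition: suppose the intervals are an $n$-tuple containing each $v_i$ but the boundary condition fails, i.e., there exist $i \in$ top-$k$ and $j \notin$ top-$k$ with $l_i < u_j$. I would then exhibit admissible values $v_i' \in F_i$ and $v_j' \in F_j$ with $v_i' < v_j'$ (for instance, pick $v_i'$ and $v_j'$ strictly between $l_i$ and $u_j$ with $v_i' < v_j'$, which is possible since $l_i < u_j$), leaving all other values untouched. Then in the new ranking, $j$ sits above $i$; I would observe that this forces either $j$ to enter the top-$k$ or $i$ to leave it (a short case distinction: if $j$ is not in the new top-$k$, then $k$ nodes have values at least $v_j' > v_i'$, so $i$ cannot be in the new top-$k$ either). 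Either way the output of $\mathcal{F}$ changes while all values remain inside their intervals, contradicting the definition of a set of filters.

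The only subtle point is the treatment of shared endpoints: the definition explicitly permits a single common boundary point between the filters of a top-$k$ and a non-top-$k$ node, and the contrapositive construction must avoid relying on this borderline case. Because the construction only needs $l_i < u_j$ strictly (and the pairwise distinctness assumption on values rules out the equality case causing any real issue in the other direction), no complication arises. I expect the main obstacle to be phrasing this boundary bookkeeping cleanly and making sure that the choice of $v_i'$, $v_j'$ together with the unchanged remaining values is still consistent with membership in the respective filters, which it is because the intervals are exactly those assigned at time $t$ and already contain the original $v_\ell$.
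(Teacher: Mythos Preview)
The paper does not actually prove this lemma; it is stated without proof as an immediate observation following the definition of a set of filters (the preceding sentence simply notes that filters of top-$k$ and non-top-$k$ nodes must be ``disjoint except for a (possible) single common point at their boundaries''). Your argument is correct and supplies precisely the details the paper leaves implicit.

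One small technical point in the $(\Rightarrow)$ direction: your parenthetical construction, ``pick $v_i'$ and $v_j'$ strictly between $l_i$ and $u_j$'', does not quite work as written. The values live in $\mathbb{N}$, so there may be no integers strictly between $l_i$ and $u_j$, and even if there are, such a point need not belong to $F_i$ (which requires $v_i' \le u_i$) or to $F_j$ (which requires $v_j' \ge l_j$). The clean choice is simply $v_i' = l_i \in F_i$ and $v_j' = u_j \in F_j$ (replacing an infinite bound by a sufficiently small or large natural number if necessary), which already yields $v_i' < v_j'$ from the hypothesis $l_i < u_j$. With this adjustment your case distinction showing that the top-$k$ set must change goes through unchanged.
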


To analyze the quality of our online-algorithm, we use analysis based on competitiveness and compare the communication volume induced by the algorithm to that of an appropriately defined offline algorithm.
In our model, an offline algorithm knows all the input streams in advance, and without further restrictions can trivially solve the Top-$k$-Monitoring Problem without any communication.
In order to still get meaningful results and assess the quality of our algorithm, we assume the optimal offline algorithm $OPT$ to use filters assigned by the coordinator to the nodes. 
Hence, to lower bound the cost induced by $OPT$, we will essentially count the number of filter updates over time.  

We call an online algorithm $A$ \emph{c-competitive} if for every instance (i) its communication volume is by a factor of at most $c$ larger than the communication volume of $OPT$, in case $A$ is deterministic and (ii) the expected communication volume is by a factor of at most $c$ larger than the communication volume of $OPT$, in case $A$ is randomized.

\section{Filter-Based Algorithm for Top-$k$-Position Monitoring}
\label{sec:filterAlg}
In this section we describe our algorithm for monitoring the IDs of the nodes holding the $k$ largest values. 
Assuming that we have distributed protocols to determine the maximum and minimum value hold by the nodes (Sect.~\ref{sec:maxProtocol}), the key idea is to use filters to prevent that nodes communicate changes to the coordinator that are not essential for the correct computation of the function $\mathcal{F}$. 
In case filters are violated, these protocols are applied to gain information required for updating the set of filters properly. 

\subsection{Monitoring of Top-$k$-Position}
One solution to the Top-$k$-Position Monitoring problem is to use the online dominance tracking algorithm by Lam et al.\ \cite{lam} as discussed in Section~\ref{sec:relatedWork}.
However, it would no longer provide a $c$-competitive algorithm for any $c$. 
This is due to the fact that a lot of messages might be sent because of changing values of nodes that do not lead to a change in top-$k$ and thus are not sent by an optimal algorithm.

In Algorithm~\ref{alg:topk} we present a competitive solution for the Top-$k$-Position Monitoring problem, which adopts some basic idea of \cite{lam}.
At the beginning, the coordinator determines the $k$-th and $(k+1)$-st largest value and, based on their midpoint $M$, all filters are set to be $[-\infty, M]$ and $[M, \infty]$, respectively, due to the idea of Lam et al. As soon as nodes violate their filters, \emph{these} nodes apply the \textsc{MinimumProtocol} or \textsc{MaximumProtocol}, respectively. Hereby the coordinator gets informed that some filter violations have happened and based on the received values the coordinator can determine whether the top-$k$ set has changed or not. 
In the former case, the top-$k$ set is updated by repeatedly applying the \textsc{MaximumProtocol}. 
In the latter case, the coordinator determines the new midpoint between the $k$-th and $(k+1)$-st largest value and the nodes' filters are updated accordingly.

\begin{algorithm}
	\caption{Top-$k$-Position Monitoring}
	\label{alg:topk}
 
\begin{algorithmic}[1]
%	\Procedure{Top-$k$-Position Monitoring}{}
	\State FilterReset();
	\Comment{time $t = 0$ initialization of filters}	
	\item[]
	\ForPar{Node $i \in \{1, \ldots, n\}$ }	
	\Comment{let $t > 0$ be the current time} 
		\If{ Node $i$ has a filter violation} 
			\If{ Node $i$ was in top-$k$ at time $t$-1} 
				\State \textsc{MinimumProtocol}($k$);
			\Else
				\State \textsc{MaximumProtocol}($n-k$);
			\EndIf
		\EndIf
	\EndForPar
	\item[]
	
	\Comment{ coordinator executes the following lines, including the procedures }
	\If{ at least one node has communicated a value at the end \\ ~~~~ of \textsc{MinimumProtocol} in line 5 or \textsc{MaximumProtocol} in line 7 }
		\State \textsc{FilterViolationHandler}();
	\EndIf
	\item[]
	\Procedure{\textsc{FilterViolationHandler()}}{}
	\If{\textsc{MinimumProtocol} has communicated a minimum to the coordinator} 
		\State $min \gets$ minimum communicated in line 5;
	\EndIf
	
	\If{\textsc{MaximumProtocol} has communicated a maximum to the coordinator}
		\State $max \gets$ maximum communicated in line 7;
	\EndIf

	\If{$max$ not set}
		\State $max \gets$ apply \textsc{MaximumProtocol}($n-k$) to all $i \notin $ top-$k$ at time $t-1$;
	\Else
		\State $min \gets$ apply \textsc{MinimumProtocol}($k$) to all $i \in $ top-$k$ at time $t-1$;
	\EndIf
	
	\State $T^+(t_0, t) \gets \min ( T^+(t_0, t-1), min)$;
	\State $T^-(t_0, t) \gets \max ( T^-(t_0, t-1), max)$;
	
	\If{ $T^+(t_0,t) < T^-(t_0,t)$ }
		\State \textsc{FilterReset}();
	\Else 
		\State $M \gets $ midpoint of the interval $[T^+(t_0,t), T^-(t_0,t)]$;
		\State broadcast $M$ to let the nodes update their filters;
	\EndIf
	\EndProcedure
	\item[]

	\Procedure{\textsc{FilterReset()}}{}
		\For{$i \gets 1$ to $k+1$} 
			\State apply \textsc{MaximumProtocol}($n$) to all nodes $i' \notin $ top-$(i-1)$;
		\EndFor
		\State $M \gets $ midpoint of the interval between the k-th and (k+1)-st largest value;
		\State broadcast $M$ to let the nodes $i' \in$ top-$k$ reset their filters to $[M, \infty]$ and $i' \notin$ top-k analog;
	\EndProcedure
\end{algorithmic}
\end{algorithm}

In order to prove the competitiveness of Algorithm~\ref{alg:topk}, we need the following definition and the consecutive Lemma~\ref{le:filterChange}:
\begin{definition}
	Let $t_0, t$ be given times with $t \geq t_0$. 
	We denote the maximum over all values observed by nodes $i \notin$ top-$k$ during the interval $[t_0,t]$ by $T^-(t_0, t) = \max_{t_0 \leq t' \leq t} \max_{i \in \{k+1, \ldots, n\} }  ( v_i^{t'} )$.
	Consequently, let $T^+(t_0, t)$ be the minimum over all values observed by nodes $i \in $ top-$k$ during the interval $[t_0,t]$.
\end{definition}

\begin{lemma}
	\label{le:filterChange}
	If $OPT$ uses the same set of filters throughout the interval $[t', t'']$, the minimum over all nodes $i \in $ top-$k$ is greater or equal the maximum over all nodes $i \notin$ top-$k$ during $[t', t'']$. Formally, $T^+(t', t'') \geq T^-(t', t'')$. 
\end{lemma}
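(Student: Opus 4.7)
The plan is to reduce the claim to the structural characterisation of valid filter sets given in Lemma~\ref{le:filterDef}. The proof splits naturally into two steps.

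First I would argue that if $OPT$ keeps the same filter set $(F_1,\ldots,F_n)$ throughout $[t',t'']$, then the top-$k$ set does not change during this interval. Suppose for contradiction that there exist times $t_1,t_2\in[t',t'']$ and nodes $i,j$ such that $i\in$ top-$k$, $j\notin$ top-$k$ at $t_1$, whereas $i\notin$ top-$k$, $j\in$ top-$k$ at $t_2$. Applying Lemma~\ref{le:filterDef}(1) at $t_1$ gives $l_i\geq u_j$, and applying it (with swapped roles) at $t_2$ gives $l_j\geq u_i$. Combined with the trivial $l_i\leq u_i$ and $l_j\leq u_j$ this forces $l_i=u_i=l_j=u_j$, i.e.\ both filters collapse to the same singleton, containing both $v_i^{t_1}$ and $v_j^{t_1}$. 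This contradicts the standing assumption that the observed values are pairwise distinct at every time step.

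Second, having fixed a common top-$k$ set $S$ for all $t\in[t',t'']$, I would set $L:=\min_{i\in S} l_i$ and $U:=\max_{j\notin S} u_j$. By Lemma~\ref{le:filterDef} we have $l_i\geq u_j$ for every $i\in S$ and $j\notin S$, hence $L\geq U$. For every $t\in[t',t'']$ and every $i\in S$, $v_i^t\in F_i$ implies $v_i^t\geq l_i\geq L$, so minimising over $t$ and $i$ yields $T^+(t',t'')\geq L$. A symmetric argument using $v_j^t\leq u_j\leq U$ for $j\notin S$ yields $T^-(t',t'')\leq U$. Chaining the three inequalities gives $T^+(t',t'')\geq L\geq U\geq T^-(t',t'')$.

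The only delicate step is the first one: proving that the top-$k$ set is actually constant under fixed filters. Without the distinctness assumption on the $v_i$, one could still have the two-node swap described above, with both values sitting at the common boundary point, and the lemma would require a minor adjustment (replacing the strict order argument by a tie-breaking rule). The second step is then purely a straightforward chaining of the inequalities provided by the filter characterisation.
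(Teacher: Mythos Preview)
Your argument is correct and rests on the same key ingredient as the paper, Lemma~\ref{le:filterDef}. The paper's proof is a terse contradiction (pick witnesses $i,j$ with $l_i\le v_i^{t_1}=T^+<T^-=v_j^{t_2}\le u_j$, contradicting $l_i\ge u_j$), whereas you give a direct proof and make explicit the preliminary fact that the top-$k$ set stays constant under a fixed valid filter set---something the paper uses only implicitly when it applies Lemma~\ref{le:filterDef} to the pair $(i,j)$. That preliminary step can in fact be shortened: constancy of top-$k$ already follows from the very definition of a set of filters (the value of $\mathcal{F}$ does not change while all $v_i$ remain in $F_i$, and $v_i\in F_i$ is part of validity), so you need not re-derive it from Lemma~\ref{le:filterDef}.
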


\begin{proof}
	Proof by contradiction. 
	Assume $OPT$ uses the same set of filters throughout the interval $[t', t'']$, but \linebreak $T^+(t', t'') < T^-(t', t'')$ holds. 
	Then there are two nodes, $i \in$ top-$k$ and $j \notin$ top-$k$, and two times $t_1, t_2 \in [t', t'']$, such that $v_i^{t_1} = T^+(t', t'')$ and $v_j^{t_2} = T^-(t', t'')$.
	Due to the definition of a set of filters and the fact that $OPT$ has not communicated during $[t', t'']$, $OPT$ must have set the filter for node $i$ to $[s_1, \infty]$, $s_1 \leq v_i^{t_1}$, and for node $j$ to $[-\infty, s_2]$, $s_2 \geq v_j^{t_2}$.
	This is a contradiction to the definition of a set of filters and Lemma~\ref{le:filterDef}.
\end{proof}

The results above lead to the following theorem.

\begin{theorem}
	\label{th:competitiveness}
	Let $M(n)$ be the number of messages required to determine the maximum or minimum value by \textsc{MaximumProtocol} and \textsc{MinimumProtocol} in Algorithm~\ref{alg:topk}.
	Furthermore, let $\Delta = max_{t} (v_k^t-v_{k+1}^t)$. Then, Algorithm~\ref{alg:topk} is $\bigO{(\log \Delta + k) \cdot M(n)}$-competitive.
\end{theorem}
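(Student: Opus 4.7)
The plan is to partition time into \emph{epochs} delimited by consecutive invocations of \textsc{FilterReset} (the initial reset at $t=0$ opens the first epoch). I will show that (i) within every epoch the algorithm sends $\bigO{(\log \Delta + k) \cdot M(n)}$ messages on expectation, while (ii) $OPT$ must update at least one filter per epoch and thus pays at least one message. Summing these two bounds over all epochs and using linearity of expectation then yields the claimed competitive ratio.

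For (i), the \textsc{FilterReset} that opens an epoch invokes \textsc{MaximumProtocol} at most $k+1$ times and then broadcasts a single midpoint, contributing $\bigO{k \cdot M(n)}$ on expectation. Between that reset and the next one, every filter violation triggers one protocol invocation from the violating side plus, inside \textsc{FilterViolationHandler}, one complementary invocation and one broadcast, i.e.\ $\bigO{M(n)}$ messages per midpoint update. It remains to bound the number of midpoint updates in an epoch by $\bigO{\log \Delta}$. Adapting the mid-point argument of Lam et al.~\cite{lam}, I would observe that after a midpoint $M$ placed at the centre of the current $[T^-, T^+]$, the next filter violation either pushes a top-$k$ value below $M$ (so the new $T^+$ drops to a value $\le M$) or pushes a non-top-$k$ value above $M$ (so the new $T^-$ rises to a value $\ge M$); in either case the length of $[T^-, T^+]$ at least halves. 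The epoch starts with this interval of length at most $v_k^{t_0} - v_{k+1}^{t_0} \le \Delta$ (the values known to the coordinator right after the opening reset) and ends as soon as the length drops below $1$, since the $v_i^t$ are integers. Hence at most $\bigO{\log \Delta}$ midpoint updates can occur before $T^+ < T^-$ triggers the next \textsc{FilterReset}, which closes the epoch. Summing gives $\bigO{(\log \Delta + k) \cdot M(n)}$ messages per epoch on expectation.

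For (ii), let $[t_0, t]$ be the time span of a fixed epoch whose closing \textsc{FilterReset} at time $t$ is triggered precisely by $T^+(t_0, t) < T^-(t_0, t)$. The contrapositive of Lemma~\ref{le:filterChange} forces $OPT$ to change its set of filters at least once during $[t_0, t]$, and each such change costs $OPT$ at least one coordinator-to-node message. Combining (i) and (ii), if $N$ epochs are executed on a given input, then $\mathbb{E}[\text{ALG}] \le N \cdot \bigO{(\log \Delta + k) \cdot M(n)}$ while $OPT \ge N$, so their ratio is $\bigO{(\log \Delta + k) \cdot M(n)}$ as claimed.

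The main obstacle I anticipate is making the ``interval halves'' step airtight. Both \textsc{MinimumProtocol} and \textsc{MaximumProtocol} are invoked in \textsc{FilterViolationHandler}, so $T^+$ and $T^-$ can both move in the same step; I have to argue that at least one endpoint is forced across $M$ by a genuine violation and that the simultaneous movement of the other endpoint cannot undo the halving. A minor secondary subtlety is verifying that $T^+(t_0, t_0)$ and $T^-(t_0, t_0)$ are indeed initialized to the $k$-th and $(k+1)$-st largest values at the start of each epoch, so the initial interval length bound by $\Delta$ is legitimate. Granting these structural facts, only $M(n)$ depends on the randomness of the subroutines, so expectations factor through and the bound above is obtained directly.
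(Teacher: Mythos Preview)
Your proposal is correct and uses the same two ingredients as the paper---Lemma~\ref{le:filterChange} and the midpoint-halving argument---but assembles them via the dual decomposition. The paper partitions time by $OPT$'s communication times $t_1,t_2,\ldots$ and then upper-bounds the algorithm on each interval $I=[t_{i-1}+1,t_i-1]$ by showing that at most one \textsc{FilterReset} (via Lemma~\ref{le:filterChange}) and at most $2\log\Delta$ calls to \textsc{FilterViolationHandler} (via halving) can occur there; you instead partition by the algorithm's \textsc{FilterReset} calls and lower-bound $OPT$ by one message per epoch. Both routes give the same bound. Yours is arguably tidier in that each epoch contains exactly one reset, so you get $\log\Delta$ handler calls rather than $2\log\Delta$; the paper's route, on the other hand, sidesteps the mild double-counting at epoch boundaries (a single $OPT$ change at a reset time $t_i$ could be claimed by two adjacent epochs) that you would have to dismiss as a constant factor. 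The two obstacles you flag are real but harmless: the halving survives simultaneous movement of both endpoints because the violating side is forced strictly across $M$ while the non-violating side cannot cross it, so the new $T^+-T^-$ is at most half the old one; and $T^+(t_0,t_0),T^-(t_0,t_0)$ being $v_k^{t_0},v_{k+1}^{t_0}$ is implicit in the definition of $T^\pm(t_0,\cdot)$ once $t_0$ is reset by \textsc{FilterReset}.
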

\begin{proof}
	To prove the theorem, we split the time into intervals based on the time steps $t_1, t_2, \ldots$ at which $OPT$ communicates.
	Let $t_0$ be the time step directly after the initialization.
	The outline of our proof is as follows: We consider an arbitrary interval $I := [t_{i-1}+1, t_i-1]$ for $i > 1$.
	First of all, we count the number of executions of the different procedures of our algorithm during this interval $I$. Then, we upper bound the number of messages exchanged during an execution of these procedures and finally, combine both observations to upper bound the overall number of messages sent during the interval $I$.
	
	First, observe that $OPT$ does not communicate during $I$ and thus, the top-$k$ set does not change.
	According to Lemma~\ref{le:filterChange} the procedure \textsc{FilterReset} is executed at most once during the interval $I$ since it is only called if $T^+(t_0,t) < T^-(t_0,t)$.
	Furthermore, at the beginning of $I$ the difference between $T^+(\cdot, t_{i-1}+1)$ and $T^-(\cdot, t_{i-1}+1)$ is at most $\Delta$.
	Because the procedure \textsc{FilterViolationHandler} is only called when a filter-violation occurs and due to the fact that in lines $23-25$ the difference between $T^+$ and $T^-$ is at least halved, there can be at most $\log \Delta$ time-steps in which filter-violations occur between two calls of \textsc{FilterReset}. Hence, during the interval $I$ there can be at most $2\cdot \log \Delta$ calls of the procedure \textsc{FilterViolationHandler}.
	
	Next we upper bound the number of messages sent by the algorithm during the interval $I$.
	Fix a time $t$. If at time $t$ there is no filter-violation, the algorithm incurs no communication.
	In case there are filter-violations, lines $5$ and/or $7$ may need $M(n-k)+M(k)$ messages.
	Additionally, a call of the procedure \textsc{FilterViolationHandler} may lead to the exchange of $\bigO{M(n)}$ messages and the procedure \textsc{FilterReset} to $(k+1)\cdot M(n)$ messages.
	In conclusion, the algorithm needs at most $\bigO{(\log \Delta + k) \cdot M(n))}$ messages in the interval $I$.
	Observe that $OPT$ communicates at time $t_i$ and our algorithm may communicate at most $k \cdot M(n)$ messages at $t_i$.
	
	Thus, if we consider the execution of the algorithm for a certain time horizon and assume that $OPT$ communicates at least $r$ times, Algorithm~\ref {alg:topk} communicates at most $(r+1) \cdot \bigO{(\log \Delta + k)\cdot M(n)}$ messages and hence, the algorithm is $\bigO{(\log \Delta + k) \cdot M(n)}$-competitive.
\end{proof}

\section{Distributed Protocol For Maximum Computation}
\label{sec:maxProtocol}
In this section we present and analyze a protocol to determine the maximum value currently hold by a set of nodes. Here we no longer consider an online problem, as we assume that the protocol is applied at a \emph{fixed} time $t$. Hence, also the values of the nodes applying the protocol do not change during a single execution.

\begin{algorithm}
	\caption{MaximumProtocol}
	\label{alg:max}
	\begin{algorithmic}[1]
	\Procedure{MaximumProtocol}{$N \in \{n, n+1, \ldots \}$} 
	\Comment{$N$ is an upper bound on number of nodes}	
	\ForPar{Node $i \in \{1, \ldots, n \}$}
	\State $active \gets$ true;
	\EndForPar

	\For{Round $r \gets  0$ to $\log N$}
		\ForPar{Node $i \in \{1, \ldots, n \}$}
			\If{Node $i$ is active}
				\If{$max_{r-1} > v_i$}
					\State goto line 14;
				\EndIf		
				\State $p \gets$ Result of a coin flip with success-probability $\frac{2^{r}}{N}$;
				\If{$p = 1$}
					\State Send ($i$, $v_i$) to coordinator;
					\State $active \gets$ false;
				\EndIf
			\EndIf
			\EndForPar
		\State coordinator broadcasts maximum $max_r$ of all seen values $v_i$;
	\EndFor

\EndProcedure

	\end{algorithmic}
\end{algorithm}

The idea of how to determine the maximum value currently hold by $n$ nodes is as follows (c.f.\ Algorithm~\ref{alg:max}): 
The algorithm proceeds in $\log N$ rounds, for a given $N \geq n$. 
At the beginning, i.e., before round $r=0$, all nodes are set to be active. 
In each round $r$, each active node decides independently of the other nodes to send its value to the coordinator with a probability of $2^r/N$. 
After this, the coordinator broadcasts the largest value observed so far and nodes having a value not larger than this maximum are deactivated, i.e., no longer take part in the algorithm. 
This broadcast starts the next round which continues with all nodes still being active. 
Note that Algorithm~\ref{alg:max} constitutes a Las Vegas algorithm that always computes the correct solution in $\log N$ steps, but the number of messages exchanged is described by a random variable.

To analyze the expectation of this random variable, we first analyze the probability that a fixed node sends a message to the coordinator during one execution of \textsc{MaximumProtocol}.
\begin{lemma}
	\label{le:probabilitySending}
	Let $X_i$ be a binary random variable indicating whether node $i \in \{1, \ldots, n\}$ sends a message during a run of Algorithm 1.
	Then,
	\[ \Pr[X_i=1]  \leq \frac{1}{N} + \sum_{r=1}^{\log N} \frac{2^{r}}{N} \cdot \left( 1- \frac{2^{r-1}}{N}\right)^i. \]
\end{lemma}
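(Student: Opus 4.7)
The plan is to decompose the event $\{X_i = 1\}$ according to the round $r \in \{0, 1, \ldots, \log N\}$ in which node $i$ sends its message. Since a node transmits at most once (it is deactivated immediately after sending in line~13), these events are disjoint, so $\Pr[X_i = 1] = \sum_{r=0}^{\log N} \Pr[\text{node $i$ sends in round } r]$. The goal is to upper bound each per-round term and then sum.

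I would next give a clean characterisation of when node $i$ sends in round $r$. Recall the convention that node $i$ holds the $i$-th largest value, so the nodes with values strictly larger than $v_i$ are exactly $1, 2, \ldots, i-1$. Node $i$ transmits in round $r$ iff (a) it is still active at the start of round $r$, which by the check in line~8 requires both that none of the nodes $1, \ldots, i-1$ has sent in any round $r' \in \{0, \ldots, r-1\}$ (otherwise $\max_{r-1} > v_i$ deactivates $i$) and that node $i$ itself has not sent in any such round; and (b) node $i$'s independent Bernoulli trial in round $r$, with success probability $2^r/N$, succeeds. Smaller-valued nodes $j > i$ do not affect whether $i$ is active, since a broadcast of a value $\leq v_i$ never deactivates $i$.

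Since all coin flips are mutually independent across both nodes and rounds, the probability of event (a) factorises. Each of the $i-1$ larger nodes, plus node $i$ itself, must fail its coin flip in every round $0, 1, \ldots, r-1$, giving
\[
  \Pr[\text{$i$ active at start of round } r] \;=\; \left(\prod_{r'=0}^{r-1} \left(1 - \frac{2^{r'}}{N}\right)\right)^{i}.
\]
Multiplying by $2^r/N$ for event (b) and using $1 - 2^{r'}/N \leq 1$ to discard every factor except the last, I obtain
\[
  \Pr[\text{node $i$ sends in round } r] \;\leq\; \frac{2^r}{N}\cdot\left(1-\frac{2^{r-1}}{N}\right)^i \qquad (r \geq 1),
\]
while for $r = 0$ there is nothing to condition on and the probability is simply $2^0/N = 1/N$. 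Summing these bounds over $r = 0, 1, \ldots, \log N$ yields the stated inequality.

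The only subtle point, and what I would flag as the main bookkeeping hazard, is getting the exponent $i$ right rather than $i-1$: one must remember to include node $i$'s own past failures among the independent events that make it active at round $r$, since a node that already transmitted in an earlier round cannot transmit again. Everything else is a direct application of independence and the trivial bound $(1-2^{r'}/N) \leq 1$ to simplify the product into a single factor.
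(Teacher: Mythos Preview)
Your proof is correct and follows essentially the same approach as the paper: decompose $\{X_i=1\}$ by the round in which node $i$ sends, bound the probability that node $i$ is still active at round $r$ by the product $\prod_{r'=0}^{r-1}(1-2^{r'}/N)^i$ of coin-flip failure probabilities for nodes $1,\ldots,i$, and then discard all but the last factor. The paper phrases the activity bound via a short recursion on an indicator $Y_{i,r}$ (explicitly noting that $Y_{i,r}=1$ forces $Y_{i',r}=1$ for every $i'\le i$, which is the step you use implicitly when equating ``no $j\le i$ has sent'' with ``all their coin flips failed''), whereas you compute the product directly; the content is identical.
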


\begin{proof}
	Let $S_{i,j}$ be a binary random variable indicating whether node $i$'s coin flip is successful in round $r$. Let $Y_{i,j}$ be a binary random variable indicating that node $i$ is active in round $j$. 
	A node sends a message in round $j$ if and only if it is active in round $j$ and its coin flip results in a success. 
	Furthermore it deactivates itself after sending a message, so it is not possible that one node sends two or more messages. 
	
	Hence, 
	\begin{align*}
		\Pr[X_i=1] &= \sum_{r=0}^{\log N} \Pr[ S_{i, r} \mbox{\textbf{ and }} Y_{i,r}=1 ] \\
		&= \sum_{r=0}^{\log N} \Pr[ S_{i,r} \big\vert Y_{i,r}=1 ]  \cdot \Pr[Y_{i,r} = 1] \\
		&= \sum_{r=0}^{\log N} \frac{2^{r}}{N} \cdot \Pr[ Y_{i,r}=1]\enspace.
	\end{align*}
	Observe that a node $i$ is active at the beginning of round $j$ if and only if it is active at the beginning of round $j-1$ and no node $i' \leq i$ sends a message in round $j-1$. Hence, we obtain $\Pr[Y_{i,0}=1] = 1$ and for $j > 0$:
	\begin{align*}
		&\Pr[Y_{i,j}] =
	\Pr [\mbox{no } i'\leq i \mbox{ sends in round } j-1 \big\vert Y_{i, j-1} = 1] 
		  \cdot \Pr [Y_{i, j-1}=1] \enspace .
	\end{align*}
	In a fixed round, the decisions of active nodes whether to send a message or not are independent. Additionally, $Y_{i,j} = 1$ implies $Y_{i',j} =1 \enspace \forall 1 \leq i' \leq i$. Hence, 
	\begin{align*}
		&\Pr [\mbox{no } i'\leq i \mbox{ sends in round } j-1 \big\vert Y_{i, j-1} = 1] 
		\leq  \left(1-\frac{2^{j-1}}{N}\right)^i
	\end{align*}
	and thus by an inductive argument, 
	\begin{align*}
		\Pr[Y_{i,j}=1] \leq \prod_{r=0}^{j-1} \left(1-\frac{2^{r}}{N} \right)^i \leq \left(1-\frac{2^{j-1}}{N} \right) ^i\enspace.
	\end{align*}
	Therefore the probability for $X_i =1$ is bounded as claimed by the lemma.  
\end{proof}
This result about the probability that a node sends a message can be used to determine an upper bound on the expected communication volume.

\begin{theorem}[Upper Bound]
	\label{th:max}
	The expected number of messages sent in Algorithm~\ref{alg:max} is at most $2\cdot \log N + 1$ and $\bigO{\log N}$ with high probability, i.e., for any fixed $c>1$ it is $\bigO{\log{N}}$ with probability at least $1-\frac{1}{N^c}$.
\end{theorem}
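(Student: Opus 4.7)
The plan is to prove the two claims of the theorem---the expectation bound $2\log N + 1$ and the $O(\log N)$ high-probability bound---separately, both building on Lemma~\ref{le:probabilitySending}.

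For the expectation, let $X = \sum_{i=1}^{n} X_i$ denote the total number of node-to-coordinator messages. Linearity of expectation together with Lemma~\ref{le:probabilitySending} gives
\begin{align*}
\mathbb{E}[X] \;\leq\; \frac{n}{N} + \sum_{r=1}^{\log N} \frac{2^{r}}{N} \sum_{i=1}^{n} \left(1 - \frac{2^{r-1}}{N}\right)^{\!i}.
\end{align*}
I would bound the inner geometric sum by $q/(1-q) \leq N/2^{r-1}$ with $q = 1 - 2^{r-1}/N$, so that each outer term collapses to exactly $2$. Summing over the $\log N$ rounds and using $n \leq N$ then yields $\mathbb{E}[X] \leq 2\log N + 1$, as claimed. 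The coordinator's broadcasts contribute at most $\log N + 1$ deterministic additional messages, preserving the $O(\log N)$ expected-total bound.

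For the high-probability direction, the variables $X_i$ are not independent, since whether node $i$ sends in round $r$ depends on whether some $i' < i$ has already sent and thereby deactivated $i$. My plan is to argue that the family $(X_i)_{i=1}^n$ is negatively associated---intuitively, a successful send only makes further sends less likely through the deactivation mechanism---and then apply the Chernoff bound for sums of negatively associated indicators. Since $\mathbb{E}[X] = O(\log N)$, this should give $\Pr[X \geq (c+2)\log N + 2] \leq N^{-c}$ for any fixed constant $c > 1$; the deterministic broadcasts again cost only $O(\log N)$.

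The main obstacle is formalizing concentration without appealing to full independence. If a clean negative-association argument is too technical, I would fall back on a round-by-round analysis: first show that the number $A_r$ of active nodes at the start of round $r$ is $O(N/2^{r-1})$ with probability at least $1 - N^{-c}$ (since by round $r-1$ it is very likely that some node among the top $O(N/2^{r-1})$ indices has sent a message, deactivating all nodes below); then Chernoff the $\mathrm{Bin}(A_r, 2^r/N)$ count of messages within that round to bound it by $O(\log N / \log N) = O(1)$ with polynomially small failure; and finally take a union bound over the $\log N + 1$ rounds.
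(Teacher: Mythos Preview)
Your proposal is correct and matches the paper's proof: the expectation calculation is identical (swap sums, collapse the geometric series so each round contributes~$2$), and the high-probability part proceeds via negative dependence plus Chernoff. The only minor difference is that the paper verifies the concrete product condition $\Pr[\bigwedge_{i\in I} X_i=1]\le\prod_{i\in I}\Pr[X_i=1]$ (citing Panconesi--Srinivasan) rather than full negative association, which is slightly easier to check here and makes your round-by-round fallback unnecessary.
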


\begin{proof}
	By Lemma~\ref{le:probabilitySending}, we have $\mathbb{E}[X_i] \leq \frac{1}{N} + \sum_{r=1}^{\log N} \frac{2^{r}}{N} \cdot \left( 1- \frac{2^{r-1}}{N}\right)^i$.  Hence,
	\begin{align*}
		\mathbb{E}[X] &\leq 1 + \sum_{i=1}^n \sum_{r=1}^{\log{N}} \frac{2^{r}}{N} \left( 1- \frac{2^{r-1}}{N}\right)^i  
		\leq 1 + \frac{1}{N} \sum_{r=1}^{\log{N}} 2^{r} \sum_{i=0}^{n-1} \left( 1- \frac{2^{r-1}}{N}\right)^i \enspace.
	\end{align*}
	Replacing the terms of the geometric series, we obtain
	\begin{align*}
		 1 + \frac{1}{N} \sum_{r=1}^{\log{N}} 2^{r} \frac{1}{1-\left(1-\frac{2^{r-1}}{N} \right)}
		&= 1 + \frac{1}{N} \sum_{r=1}^{\log{N}} 2^{r} \frac{1}{\frac{2^{r-1}}{N}}
		= 1 + \sum_{r=1}^{\log{N}} 2 \leq 2\cdot \log{N}+1 \enspace ,
	\end{align*}
	proving the expected number of messages.
	
	Note that although the random variables $X_i$ are not independent, a variable $X_i$ only depends on those $X_j$ with $j<i$ and observing the event that a node $i$ sends a message, can only decrease the probability of sending a message of another node.
	It is known that Chernoff bounds for the upper tail of a distribution can be applied to variables satisfying such a kind of negative correlation \cite{chernoff}.
	More precisely, we can apply such a Chernoff bound if for all $I \subseteq \{1, \ldots, n\}$ it holds 
	\[
	\Pr[\forall i \in I: X_i=1] \leq \prod_{i \in I}\Pr[X_i=1]\enspace.
	\] 
	Without loss of generality assume that $I = \{i_1, \ldots, i_l\}$ and $v_{i_1} \geq v_{i_2} \geq \ldots \geq v_{i_l}$. Then we have
	\begin{align*}
		\Pr[\forall i \in I: X_i=1] 
		&= \Pr[X_{i_1} = 1 \wedge \ldots \wedge X_{i_l}=1] \\
		& = \Pr[X_{i_1}=1 \big\vert X_{i_2}=1 \wedge \ldots \wedge X_{i_l}=1]
		 \cdot \Pr[X_{i_2}=1 \wedge \ldots \wedge X_{i_l}=1]\\
		& \leq \Pr[X_{i_1}=1] \cdot \Pr[X_{i_2}=1 \big\vert X_{i_3}=1 \wedge \ldots \wedge X_{i_l}=1]
		 \cdot \Pr[X_{i_3}=1 \wedge \ldots \wedge X_{i_l}=1]\\
		& \leq \Pr[X_{i_1}=1] \cdot \Pr[X_{i_2}=1] \cdot \ldots \\
		& \hspace{1cm} \vdots \\
		&\leq \prod_{i \in I}\Pr[X_i=1]
	\end{align*}
	and thus, the $X_i$'s are negatively correlated and we obtain the claimed result by applying a standard Chernoff bound.
\end{proof}

We show that the bound of Theorem~\ref{th:max} is optimal, by showing that if all values hold by $n$ nodes are pairwise distinct, then any randomized algorithm needs a number of messages that is at least logarithmic, i.e., $\mathbb{E}[X] = \Omega(\log{n})$. By following Yao's minimax principle, it is sufficient to show that, given a probability distribution on the inputs, any deterministic algorithm has to send $\Omega(\log{n})$ messages on expectation. 

\begin{theorem}[Lower Bound]
	\label{th:yao}
	Every randomized algorithm requires $\Omega(\log{n})$ messages on expectation to compute the maximum in our model.
\end{theorem}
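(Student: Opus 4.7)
The plan is to invoke Yao's minimax principle exactly as the paper suggests: fix a probability distribution $\mathcal{D}$ on inputs with pairwise distinct values and prove that every deterministic algorithm has expected message complexity $\Omega(\log n)$ on $\mathcal{D}$. The distribution must be designed so that the max-value is genuinely unpredictable and no node can locally decide that it is the argmax -- otherwise ``the argmax recognises itself and sends a single message'' would give a $1$-message solution. A concrete candidate is $\mathcal{D}$: draw the $n$ values as a uniformly random $n$-subset of $\{1,\ldots,cn\}$ for a constant $c>1$, and assign them to nodes via a uniformly random permutation; under $\mathcal{D}$, neither a single threshold broadcast nor any node's own value pins down the argmax.

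Given such a $\mathcal{D}$ and a deterministic algorithm $A$, I organise the execution into rounds separated by the coordinator's broadcasts. Write $m_r$ for the number of communications in round $r$ (the point-to-point messages plus the broadcast that closes the round) and $C_r \subseteq \{1,\ldots,n\}$ for the set of node IDs that remain consistent with being the argmax from the coordinator's viewpoint after round $r$. Correctness forces $|C_0|=n$ and $|C_R|=1$. The heart of the proof is the per-round ``log-shrinkage'' estimate
\[
\mathbb{E}\!\left[\log|C_r|-\log|C_{r+1}|\,\middle|\,\text{history up to round }r\right] \;\le\; c\,(m_r+1),
\]
for some absolute constant $c$, which says that $m_r$ communications cannot shrink the candidate set by more than an $O(m_r+1)$ factor on a logarithmic scale. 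Telescoping the estimate across all rounds yields $\log n \;\le\; c\cdot\mathbb{E}\!\left[\sum_r m_r\right]$, i.e.\ $\mathbb{E}[M]=\Omega(\log n)$; Yao's principle then lifts the bound to every randomized algorithm.

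The main obstacle, which I expect to occupy the bulk of the proof, is justifying the per-round estimate. In a deterministic protocol the $m_r$ senders in round $r$ are not a uniformly random subset of $C_r$: they are whichever active nodes satisfy the protocol's decision rule given their values and the broadcast history, and so in principle the selection could be correlated with the true argmax. The argument has to exploit the symmetry of $\mathcal{D}$: under the uniformly random value-subset and uniformly random permutation, the conditional distribution of the argmax given any partial transcript remains close to uniform over the remaining candidates, and a standard order-statistics computation shows that $m_r$ revealed values -- even adversarially selected by a deterministic rule -- can only rule out an $O(m_r+1)$-factor of $C_r$ on expectation. Once this is in hand, the telescoping step is routine and completes the lower bound.
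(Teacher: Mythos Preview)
Your instinct that the argmax might recognise itself under a permutation-of-$\{1,\ldots,n\}$ distribution is sound, but your proposed fix---values drawn as a random $n$-subset of $\{1,\ldots,cn\}$---does not escape the difficulty, and your central per-round estimate is in fact false on it. Consider the deterministic protocol in which the coordinator successively broadcasts thresholds $T=cn,\ cn-1,\ cn-2,\ldots$, and any node whose value equals the current $T$ responds; the first responder is the maximum. The number of rounds is $cn-M+1$ where $M$ is the top order statistic of a random $n$-subset of $\{1,\ldots,cn\}$, and a direct computation gives $\mathbb{E}[cn-M]=(c-1)n/(n+1)\approx c-1$, so the expected total message count is $O(c)=O(1)$ for constant $c$. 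In your notation, $|C_r|=n$ in every round before the last---the coordinator has learned only that certain \emph{values} are absent, which says nothing about \emph{which node} is the argmax---and then $|C_r|$ collapses from $n$ to $1$ in a single round with two messages. That round violates your log-shrinkage inequality by a factor of $\Theta(\log n)$; averaging over whether a response occurs does not rescue it either, since conditional on reaching round $R$ the response probability is $n/(cn-R+1)\ge 1/c$, not $O(1/\log n)$.

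The failure is structural rather than a matter of constants: a broadcast threshold lets the protocol \emph{select} the responder to be the argmax, so the heuristic ``$m_r$ revealed values behave like $m_r$ random samples and hence prune only an $O(m_r)$ fraction'' breaks down, and the label-symmetry of $\mathcal{D}$ is irrelevant because the protocol exploits the value ordering. The paper takes a different route altogether: it (tacitly) restricts attention to probing-style protocols---asserting that a deterministic algorithm ``can basically not do better than'' a fixed probe sequence that skips nodes below the running maximum---and then identifies the message count with the number of left-to-right maxima in a random permutation, equivalently the root-to-maximum path length in a random binary search tree, which is classically $\Theta(\log n)$. Whatever one thinks of the rigour of that reduction, it is not the information-theoretic shrinkage argument you outline; to make your approach go through you would need either a hard distribution on which threshold broadcasts are provably unhelpful or an explicit model restriction that rules them out, and you supply neither.
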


\begin{proof}
	Let the inputs be distributed according to a distribution $P$ in such a way that each instance where each of the numbers $\{1, \ldots, n\}$ is assigned to exactly one node $v_i$ is chosen with probability $(1/n!)$.
	Consider any deterministic algorithm $A$ that calculates the maximum in our model. Since we are looking for a lower bound on the number of messages, we see that $A$ can basically not do better than having a fixed sequence $(s_1, \ldots, s_n)$ of nodes that it probes consecutively in this ordering, skipping nodes that have values smaller than the maximum value observed so far. 
	
	By the following simple argument, we can show the desired result:
	Assume for the moment, that A probes \emph{all} nodes, i.e., A receives a randomly chosen permutation of the values $\{1, \ldots, n\}$. The course of the algorithm can be (partly) described by gradually constructing a binary search tree of the observed values. Now consider the actual course of the algorithm, i.e., A skips nodes which cannot deliver new information about the maximum value. Then the course of the algorithm is nothing else but the path in the binary search tree from the root to the node holding the maximum value. It is known (e.g., \cite{bst}) that on expectation this path has a length of $\Theta(\log{n})$, proving the theorem. 
\end{proof}
Conclusively, we can combine Theorem~\ref{th:competitiveness} and \ref{th:max} to obtain our main result.
\begin{theorem}
	Using Algorithm~\ref{alg:max} and its analog for determining the minimum in Algorithm~\ref{alg:topk}, yields a randomized $\bigO{(\log \Delta + k) \cdot \log{n}}$-competitive algorithm for Top-$k$-Position Monitoring.
\end{theorem}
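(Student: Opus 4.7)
The plan is to derive this as a direct corollary by substituting the message bound from Theorem~\ref{th:max} into the competitive ratio of Theorem~\ref{th:competitiveness}. Theorem~\ref{th:competitiveness} is stated in terms of a parameter $M(n)$ denoting the number of messages used by a single invocation of \textsc{MaximumProtocol} (or \textsc{MinimumProtocol}); Theorem~\ref{th:max} shows that for Algorithm~\ref{alg:max} this quantity is $\bigO{\log n}$ on expectation. Hence the natural first step is to argue that plugging in $M(n) = \bigO{\log n}$ gives the claimed bound of $\bigO{(\log \Delta + k) \cdot \log n}$.

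Before substituting, I would briefly justify the existence of an analogous \textsc{MinimumProtocol} with the same expected message bound. The argument is by symmetry: one can obtain such a protocol by replacing every comparison of the form ``$\max_r < v_i$'' with its reversed counterpart in Algorithm~\ref{alg:max}, or equivalently by running the maximum protocol on the transformed values $-v_i$. The analyses of Lemma~\ref{le:probabilitySending} and Theorem~\ref{th:max} carry over verbatim because they only use the fact that a node is deactivated once a better value has been broadcast; the notion of ``better'' changes but the probabilistic bookkeeping does not. Hence $M(n) = \bigO{\log n}$ on expectation for both protocols.

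The next step is to combine the two bounds. Theorem~\ref{th:competitiveness} establishes, via a worst-case count of procedure invocations over each interval $I$ between two consecutive communications of $OPT$, that the number of messages used by Algorithm~\ref{alg:topk} during $I$ is at most a deterministic constant times $(\log \Delta + k) \cdot M(n)$. When $M(n)$ is itself a random variable (the cost of an instance of the protocol), linearity of expectation applied over all intervals $I$, together with the fact that the protocol invocations are oblivious to the values produced by previous invocations, gives an expected overall message bound of $\bigO{(\log \Delta + k) \cdot \mathbb{E}[M(n)]} = \bigO{(\log \Delta + k) \cdot \log n}$ times the number of communications performed by $OPT$. This is precisely the definition of $\bigO{(\log \Delta + k) \cdot \log n}$-competitiveness for a randomized algorithm as given in Section~\ref{sec:model}.

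Since both pieces are already established, there is essentially no obstacle; the only point worth being careful about is the transition from the worst-case parameter $M(n)$ in Theorem~\ref{th:competitiveness} to an expected parameter. This is handled cleanly by linearity of expectation because the bound on the number of protocol invocations per interval $I$ in the proof of Theorem~\ref{th:competitiveness} (at most one \textsc{FilterReset}, at most $2\log \Delta$ calls to \textsc{FilterViolationHandler}, and the auxiliary communication for updating filters after $OPT$'s communication step) holds deterministically, independent of the randomness used inside the protocols.
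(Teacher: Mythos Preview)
Your proposal is correct and matches the paper's approach: the paper presents this theorem without an explicit proof, merely stating that one combines Theorem~\ref{th:competitiveness} and Theorem~\ref{th:max}. Your write-up in fact supplies more justification than the paper does, in particular the symmetry argument for \textsc{MinimumProtocol} and the observation that the deterministic bound on the number of protocol invocations per interval allows linearity of expectation to push the $\mathbb{E}[M(n)]=\bigO{\log n}$ bound through.
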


\section{Summary}
In this paper we presented and analyzed an online algorithm for keeping track of the $k$ largest values currently observed at $n$ distributed data streams. 
Our analysis shows that the approach is $\bigO{(\log \Delta + k)\cdot \log n}$-competitive with respect to an optimal filter-based offline algorithm. 
Hence, it is particularly suited for applications in large systems where one is only interested in collecting information about a small (constant sized) subset of nodes. 
In contrast to settings where the observed values (and thus the competitiveness) may grow large, the approach performs quite well when these values are naturally bounded by the application domain. 
This might be, e.g., the case when monitoring environmental information like temperatures, frequencies and similar parameters. 

It is worth mentioning that for the optimal offline algorithm our analysis only depends on the number of filter updates the algorithm communicates. 
It might be interesting to also investigate the number of messages sent by the nodes to the coordinator in order to get stronger bounds on the optimal filter-based algorithm and thus, to further improve the competitiveness factor.

As an integral part of our algorithm, we examined the question of how to distributedly compute the maximum value hold by the nodes at a fixed time. 
It might turn out that this protocol is useful in other problems considered in the distributed monitoring model. 
For a variant of our Top-$k$-Position Monitoring problem in which one is not only interested in the top-$k$ set but also the ordering of these nodes according to their values, we conjecture that a combination of the approach by Lam et al.\ \cite{lam} and our protocol might lead to an $\bigO{\log \Delta \cdot \log(n-k)}$-competitve algorithm. Still, a detailed analysis remains open for future work.

\end{document}